%
\documentclass[times,doublespace]{rncauth}
\usepackage{moreverb}
\usepackage{rawfonts,graphicx,amsfonts,amssymb,psfrag,flushend}

\graphicspath{{simulation_graph/}}

\newtheorem{thm}{Theorem}

\newtheorem{rmk}{Remark}
\newtheorem{df}{Definition}

\begin{document}

\title{Guaranteed-cost consensus for multiagent networks with Lipschitz nonlinear dynamics and switching topologies}

\author{Jianxiang~Xi\affil{1}, Zhiliang~Fan\affil{1}, Hao~Liu\affil{2,}\corrauth, Tang~Zheng\affil{1}}
\address{\hspace{12em}\affilnum{1}High-Tech Institute of Xi'an, Xi'an, 710025,
P.R. China\\ \affilnum{2}School of Astronautics, Beihang University, Beijing, 100191, P.R. China \vspace{-2em}} \corraddr{liuhao13@buaa.edu.cn;~xijx07@mails.tsinghua.edu.cn}

\cgsn{This work was supported by the National Natural Science Foundation of China under Grants 61374054, 61503012, 61503009, 61333011 and 61421063 and by Shaanxi Province Natural Science Foundation Research Projection under Grants 2016JM6014, also supported by Innovation Foundation of High-Tech Institute of Xi'an under Grants 2015ZZDJJ03. The authors would like to thank Minghua Liu for providing numerical analysis and simulation.}{}

\begin{abstract}
Guaranteed-cost consensus for high-order nonlinear multi-agent networks with switching topologies is investigated. By constructing a time-varying nonsingular matrix with a specific structure, the whole dynamics of multi-agent networks is decomposed into the consensus and disagreement parts with nonlinear terms, which is the key challenge to be dealt with. An explicit expression of the consensus dynamics, which contains the nonlinear term, is given and its initial state is determined. Furthermore, by the structure property of the time-varying nonsingular transformation matrix and the Lipschitz condition, the impacts of the nonlinear term on the disagreement dynamics are linearized and the gain matrix of the consensus protocol is determined on the basis of the Riccati equation. Moreover, an approach to minimize the guaranteed cost is given in terms of linear matrix inequalities. Finally, the numerical simulation is shown to demonstrate the effectiveness of theoretical results.
\end{abstract}

\keywords {Multi-agent network, Lipschitz nonlinearity, guaranteed-cost consensus, switching topology, Riccati equation.}

\maketitle \vspace{-2em}

\section{Introduction }\label{section1} \vspace{-0em}
Consensus is a typical collection behavior of multi-agent networks consisting of a number of autonomous dynamic agents and has been extensively investigated recently due to its wide applications in different fields such as formation and containment control for unmanned aerial vehicles \cite{c1}-\cite{c4}, synchronization control for sensor networks \cite{c5}-\cite{c6} and reconfiguration for spacecraft clusters \cite{c7}-\cite{c8}, {\it et al}. For leadless multi-agent networks, the whole dynamics contains two parts: the consensus dynamics and the disagreement dynamics, which describe the macroscopical and microcosmic behaviors of multi-agent networks, respectively. The consensus dynamics is often described by the consensus function and the disagreement dynamics is used to determine the consensus and consensualization criteria.\par

According to the dynamics of each agent, multi-agent networks can be classified three types: first-order ones, second-order ones and high-order ones. The dynamics of each agent in first-order and second-order multi-agent networks is usually described as the first-order and second-order integrators, respectively, whose consensus analysis and design problems can be simplified by these structure features (see \cite{c9}-\cite{c14} and references therein). Each agent in high-order multi-agent networks is often modeled by a general high-order linear system, whose consensus and consensualization criteria are more difficult to be determined since each agent does not have specific structure features compared with first-order and second-order integrators. Some important and interesting works about high-order linear multi-agent networks were finished in \cite{c15}-\cite{c184}, where the optimization performance was not considered.\par

In practical applications of multi-agent networks, each agent may have limited energy supply to perform certain tasks, such as sensing, communication and movement, {\it et al}. Meanwhile, consensus regulation is required to satisfy some performance indexes. For examples, when a multiple mobile autonomous vehicle network performs a specific patrol task, the distance performance is very important due to utility maximization. Thus, it is a crucial challenge to realize the tradeoff design between consensus regulation performance and energy consumption, which can usually be modeled as optimal or suboptimal consensus problems. For first-order multi-agent networks, the optimal consensus criterion was proposed in \cite{c19}, where it is essentially required that the interaction topology is a completed graph, which means that all nonzero eigenvalues of the associated Laplacian matrix are identical. For second-order multi-agent networks, Guan {\it et al}. \cite{c20} considered consensus regulation performance, but energy consumption was not dealt with, and Wang {\it et al}. \cite{c21} constructed a linear quadratic optimization index and proposed suboptimal consensus criteria.\par

For high-order multi-agent networks, optimal consensus is difficult to be achieved due to complex structures and guaranteed-cost consensus is proposed to realize suboptimal control. Linear matrix inequality (LMI) criteria for guaranteed-cost consensualization were proposed in \cite{c22} and \cite{c23}, where the dimensions of all the variables are associated with the number of agents, so the computation complexity greatly increases as the number of agents increases. This shortcoming was overcome in \cite{c24}, where the dimensions of all the variables in LMI guaranteed-cost consensus criteria are equal to the one of the dynamics of each agent. To the best of our knowledge, guaranteed-cost consensus of multi-agent networks with nonlinear dynamics and switching topologies is not comprehensively investigated and the following three challenging problems are still open: (i) How to determine the consensus function when the nonlinear dynamics is involved; (ii) How to linearize the impacts of nonlinearity on disagreement dynamics and to determine the gain matrix of the consensus protocol with switching topologies; (iii) How to obtain the minimum guaranteed cost.\par

The current paper focuses on guaranteed-cost consensus control for high-order multi-agent networks with Lipschitz nonlinear dynamics and switching topologies. By constructing a time-varying orthonormal matrix with a specific structure, the whole dynamics of multi-agent networks is decomposed into two subsystems with nonlinear dynamics, which are used to determine the consensus function and the consensualization criterion, respectively. Then, it is shown that the consensus dynamics is intrinsically coincided with the own dynamics of each agent, but the initial states are different. Furthermore, it is revealed that the product of the time-varying part of the transformation matrix and its transpose is equivalent to the Laplacian matrix of a complete graph. Based on this property and the Lipschitz condition, a sufficient condition for guaranteed-cost consensualization is presented in terms of the Riccati equation, the dimension of whose variable is independent of the number of agents, and an upper bound of the guaranteed cost is determined. Moreover, an approach is proposed to minimize the guaranteed cost.\par

Compared with the existing works about guaranteed-cost consensus of high-order multi-agent networks, the current paper has the following four novel features. Firstly, the current paper studies the impacts of nonlinear dynamics on guaranteed-cost consensus, but the literatures \cite{c22}-\cite{c24} did not. Secondly, the current paper determines an explicit expression of the consensus dynamics and its initial states. No approach was given to determine the consensus function in \cite{c22} and \cite{c23}, and the method in \cite{c24} cannot be used to investigate nonlinear cases. Thirdly, the current paper proposes an approach to linearize the nonlinear factor on disagreement dynamics, while the methods in \cite{c22}-\cite{c24} are no longer valid to deal with nonlinear dynamics. Fourthly, the current paper determines the minimum guaranteed cost under mild assumptions, but the guaranteed costs given in \cite{c22}-\cite{c24} may not be minimum.\par

The current paper is organized as follows. In Section \ref{section2}, based on graph theory, the problem description is given. Section \ref{section3} presents an approach to give the explicit expression of the consensus dynamics and to determine its initial states. In Section \ref{section4}, sufficient conditions for guaranteed-cost consensualization are proposed, and the approaches to determine the guaranteed cost and the minimum guaranteed cost are presented, respectively. A numerical example is shown to demonstrate theoretical results in Section \ref{section5}. Finally, concluding remarks are given in Section \ref{section6} and basic concepts and conclusions on graph theory are given in Section \ref{section7} .\par

{\it Notations}: ${\mathbb{R}^d}$ is the $d$-dimensional real column vector space and ${\mathbb{R}^{d \times p}}$ is the set of $d \times p$ dimensional real matrices, where $d$ and $p$ are positive integers. ${I_N}$ and ${{\bf{1}}_N}$ represent the $N$-dimensional real identity matrix and the column vector with all components 1, respectively. $0$ is used to represent the zero number and the zero vector with a compatible dimension. The notation $ \otimes $ denotes the Kronecker product. The symbol * in the matrix stands for the symmetric term. ${P^T} = P > {\rm{0}}$ and ${P^T} = P < {\rm{0}}$ mean that the symmetric matrix $P$ is positive definite and negative definite, respectively. ${\textup{trace}}\left\{ P \right\}$ denotes the trace of the matrix $P$. \vspace{-2em}\par

\section{Problem description}\label{section2} \vspace{-0.5em}
Consider multi-agent networks with $N$ identical nonlinear agents, where the $i$th agent is modeled by
\begin{eqnarray}\label{1}
{\dot x_i}(t) = A{x_i}(t) + f({x_i}(t)) + B{u_i}(t){\rm{ }}\left( {i \in \left\{ {1,2, \cdots ,N} \right\}} \right),
\end{eqnarray}
where $A \in {\mathbb{R}^{d \times d}},$ $B \in {\mathbb{R}^{d \times p}},$ ${x_i}(t)$ and ${u_i}(t)$ are the state and the control input, respectively and the nonlinear function $f:{\mathbb{R}^d} \times \left[ {0, + \infty } \right) \to {\mathbb{R}^d}$ is continuous and differentiable, which satisfies the following Lipschitz condition with a Lipschitz constant $\gamma  > 0$:
\[
\left\| {f(y) - f(z)} \right\| \le \gamma \left\| {y - z} \right\|,
\]
where $y$ and $z$ are any vectors with compatible dimensions. Multi-agent networks with the above Lipschitz nonlinear dynamics extensively exist in practical applications. For instants, the sinusoidal term in multiple manipulator systems is globally Lipschitz as shown in \cite{c25} and \cite{c26}. Actually, the Lipschitz constant $\gamma  > 0$ is a low bound of the distance to unobservability of $(A^T,B^T)$ and $(A,B)$ should be controllable to guarantee that the distance to unobservability of $(A^T,B^T)$ is positive. As shown in \cite{c27}, it is sufficient for linear multi-agent networks to achieve consensus that $(A,B)$ is controllable, but this conclusion does not hold when the Lipschitz nonlinear dynamics exists.  \par

Let $\eta \subset \mathbb{N}$ be an index of the set $\kappa $ of some undirected graphs, where $\mathbb{N}$ is the natural number set. The time function $\sigma (t):\left[ {\left. {0, + \infty } \right)} \right. \to \eta $ denotes the switching signal. The interaction topologies among the agents of multi-agent network (\ref{1}) are randomly switching among undirected graphs belonging to the set $\kappa $ and can be described by ${G_{\sigma (t)}}{\rm{ = }}\left( {V\left( {{G_{\sigma (0)}}} \right),E\left( {{G_{\sigma (t)}}} \right)} \right)$, where $V\left( {{G_{\sigma (0)}}} \right)$ is a nonempty finite time-invariant set of nodes and $E\left( {{G_{\sigma (t)}}} \right) \subseteq V\left( {{G_{\sigma (t)}}} \right) \times V\left( {{G_{\sigma (t)}}} \right)$ is a set of edges. The node ${v_i} \in {G_{\sigma (0)}}$ presents agent $i$ and the edge $\left( {{v_j},{v_i}} \right) \in E\left( {{G_{\sigma (t)}}} \right)$ denotes that there exists an interaction channel from agent $j$ to agent $i$. ${G_{\sigma (t)}}$ is said to be connected if there at least exists one node having an undirected path to every other nodes at any time $t$. Moreover, it is assumed that switching times $\{ {t_i}:i = 0,1,2, \cdots \} $ of the switching signal $\sigma (t)$ satisfy that ${t_k} - {t_{k{\rm{ - }}1}} \ge {T_{\rm{d}}}{\rm{ }}~(\forall k \ge 1)$ for a positive constant ${T_{\rm{d}}}$.\par

The following consensus protocol is applied for agent $i$ to collect the state information of its neighboring agents
\begin{eqnarray}\label{2}
{u_i}(t) = K\sum\limits_{j \in {N_{\sigma (t),i}}} {{w_{\sigma (t),ij}}\left( {{x_j}(t) - {x_i}(t)} \right)} ,
\end{eqnarray}
where $K \in {\mathbb{R}^{p \times d}}$ is the gain matrix, ${w_{\sigma (t),ij}}$ is the weight of the interaction channel $({v_j},{v_i})$ with ${w_{\sigma (t),ii}} = 0$, ${w_{\sigma (t),ji}} = {w_{\sigma (t),ij}} \ge 0$ and ${w_{\sigma (t),ij}} > 0$ if $({v_j},{v_i}) \in E\left( {{G_{\sigma (t)}}} \right)$, and ${N_{\sigma (t),i}}(t) = \left\{ {j:({v_j},{v_i}) \in E\left( {{G_{\sigma (t)}}} \right)} \right\}$ represents the index of the neighbor set of vertex ${v_i}$. Furthermore, for given symmetric and positive matrices $Q$ and $R$, consider the following linear quadratic cost function
\begin{eqnarray}\label{3}
{J_C} = \int_0^{ + \infty } {\left( {{J_{Cu}}(t) + {J_{Cx}}(t)} \right){\rm{d}}t} ,
\end{eqnarray}
where
\[{J_{Cu}}(t) = \sum\limits_{i = 1}^N {u_i^T(t)R{u_i}(t)} ,\]\vspace{-0.25em}
\[{J_{Cx}} = \sum\limits_{i = 1}^N {\sum\limits_{j \in {N_{\sigma (t),i}}} {{w_{\sigma (t),ij}}{{\left( {{x_j}(t) - {x_i}(t)} \right)}^T}Q\left( {{x_j}(t) - {x_i}(t)} \right)} } ,\]
which are called the energy consumption term and the consensus regulation term, respectively.\par

Let ${W_{\sigma (t)}} = \left[ {{w_{\sigma (t),ij}}} \right] \in {\mathbb{R}^{N \times N}}$ and ${D_{\sigma (t)}} = {\rm{diag}}\left\{ {\sum\nolimits_{j = 1}^N {{w_{\sigma (t),ij}}} ,i = 1,2, \cdots ,N} \right\}$ be the adjacency matrix and in-degree matrix of the interaction topology, respectively, then the matrix ${L_{\sigma (t)}} = {D_{\sigma (t)}} - {W_{\sigma (t)}}$ denotes the Laplacian matrix of ${G_{\sigma (t)}}$, which satisfies that ${L_{\sigma (t)}}{{\bf{1}}_N} = 0$. It should be pointed out that ${L_{\sigma (t)}}$ is piecewise continuous since interaction topologies of multi-agent network (\ref{1}) are switching with ${T_{\rm{d}}} > 0$. Let
\[x(t) = {\left[ {x_1^T(t),x_2^T(t), \cdots ,x_N^T(t)} \right]^T}{\rm{ ,}}\]\vspace{-1.25em}
\[F(x(t)) = {\left[ {f{{\left( {{x_1}(t)} \right)}^T},f{{\left( {{x_2}(t)} \right)}^T}, \cdots ,f{{\left( {{x_N}(t)} \right)}^T}} \right]^T}{\rm{ ,}}\]
then the dynamics of multi-agent network (\ref{1}) with protocol (\ref{2}) can be written as
\begin{eqnarray}\label{4}
\dot x(t) = \left( {{I_N} \otimes A - {L_{\sigma (t)}} \otimes BK} \right)x(t) + F(x(t)).
\end{eqnarray}\par

In the following, the definitions of the guaranteed-cost consensus and consensualization of multi-agent networks are given, respectively, which are used to realize the suboptimal tradeoff design between consensus regulation performance and energy consumption.\par

\begin{df} \label{definition1}
Multi-agent network (\ref{4}) is said to achieve guaranteed-cost consensus if there exist a positive constant $\beta $ and a vector-valued function $c(t)$ such that ${\lim _{t \to  + \infty }}\left( {x(t) - {{\bf{1}}_N} \otimes c(t)} \right) = 0$ and ${J_C} \le \beta$ for any bounded initial state $x(0)$, where $\beta $ and $c(t)$ are said to be the guaranteed cost and the consensus function, respectively.
\end{df}

\begin{df} \label{definition1}
Multi-agent network (\ref{1}) is said to be guaranteed-cost consensualizable by protocol (\ref{2}) if there exists a gain matrix $K$ such that it achieves guaranteed-cost consensus.
\end{df}

The current paper mainly focuses on the following three guaranteed-cost consensus problems for nonlinear multi-agent networks with switching connected topologies: (i) How to determine the impacts of nonlinear dynamics on the consensus function; (ii)  How to design the gain matrix $K$ such that multi-agent network (\ref{4}) achieves guaranteed-cost consensus; (iii) How to determine and minimize the guaranteed cost.\vspace{-1.5em}

\section{Consensus functions}\label{section3}\vspace{-0.5em}
In this section, an explicit expression of the consensus dynamics and its initial states are given, which can be used to determine the consensus function, and the impacts of the nonlinear dynamics and switching topologies on the consensus function are shown.\par

Because it is assumed that the communication topology ${G_{\sigma (t)}}$ is undirected and connected, the Laplacian matrix ${L_{\sigma (t)}}$ is symmetric and zero is its simple eigenvalue. Hence, there exists an orthonormal matrix ${U_{\sigma (t)}} = \left[ {{u_1},{u_{\sigma (t),2}}, \cdots ,{u_{\sigma (t),N}}} \right]$ with ${{{u_1} = {{\bf{1}}_N}} \mathord{\left/{\vphantom {{{u_1} = {{\bf{1}}_N}} {\sqrt N }}} \right.\kern-\nulldelimiterspace} {\sqrt N }}$ such that
\begin{eqnarray}\label{5}
U_{\sigma (t)}^T{L_{\sigma (t)}}{U_{\sigma (t)}} = {\Lambda _{\sigma (t)}} = {\rm{diag}}\left\{ {{\lambda _{\sigma (t),1}},{\lambda _{\sigma (t),2}}, \cdots ,{\lambda _{\sigma (t),N}}} \right\},
\end{eqnarray}
where $0 = {\lambda _{\sigma (t),1}} < {\lambda _{\sigma (t),2}} \le  \cdots  \le {\lambda _{\sigma (t),N}}$ are the eigenvalues of ${L_{\sigma (t)}}$. Let $\tilde x(t) = \left( {U_{\sigma (t)}^T \otimes {I_d}} \right)x(t) = {\left[ {\tilde x_1^T(t),\tilde x_2^T(t), \cdots ,\tilde x_N^T(t)} \right]^T}{\rm{ }}$. Since ${U_{\sigma (t)}}$ is piecewise continuous and is constant at the switching interval, multi-agent network (\ref{4}) can be transformed into
\begin{eqnarray}\label{6}
{\dot {\tilde x}}(t) = \left( {{I_N} \otimes A - {\Lambda _{\sigma (t)}} \otimes BK} \right)\tilde x(t) + \left( {U_{\sigma (t)}^T \otimes {I_d}} \right)F(x(t)).
\end{eqnarray}
Let ${e_i}$ $(i \in {\rm{\{ }}1,2, \cdots ,N{\rm{\} }})$ be an $N$-dimensional column vector with the $i$th element 1 and 0 elsewhere and ${\bar U_{\sigma (t)}} = \left[ {{u_{\sigma (t),2}}, u_{\sigma (t),3},\cdots ,{u_{\sigma (t),N}}} \right]$, then one has
\begin{eqnarray}\label{7}
\left( {e_1^T \otimes {I_d}} \right)\left( {U_{\sigma (t)}^T \otimes {I_d}} \right) = u_1^T \otimes {I_d},
\end{eqnarray}
\begin{eqnarray}\label{8}
\left[ {0,{I_{(N - 1)d}}} \right]\left( {U_{\sigma (t)}^T \otimes {I_d}} \right) = \bar U_{\sigma (t)}^T \otimes {I_d}.
\end{eqnarray}
Let $\varsigma (t) = {\left[ {\tilde x_2^T(t),\tilde x_3^T(t), \cdots ,\tilde x_N^T(t)} \right]^T}{\rm{ }}$ and ${\tilde \Lambda _{\sigma (t)}} = {\rm{diag}}\left\{ {{\lambda _{\sigma (t),2}},{\lambda _{\sigma (t),3}}, \cdots ,{\lambda _{\sigma (t),N}}} \right\}$, then it can be derived from (\ref{6}) to (\ref{8}) that
\begin{eqnarray}\label{9}
{\dot {\tilde x}_{1}}(t)= A{{\tilde x}_1}(t) + \frac{1}{{\sqrt N }}\sum\limits_{i = 1}^N {f({x_i}(t))} ,
\end{eqnarray}\vspace{-1.5em}
\begin{eqnarray}\label{10}
\dot \varsigma (t) = \left( {{I_{N - 1}} \otimes A - {{\tilde \Lambda }_{\sigma (t)}} \otimes BK} \right)\varsigma (t) + \left( {\bar U_{\sigma (t)}^T \otimes {I_d}} \right)F(x(t)).
\end{eqnarray}\par

Actually, subsystems (\ref{9}) and (\ref{10}) describe the consensus and disagreement dynamics of multi-agent network (\ref{4}). It can be found that both (\ref{9}) and (\ref{10}) contain the nonlinear terms $f({x_i}(t)){\rm{ }}(i = 1,2, \cdots ,N)$ , which means that the nonsingular transformation does not completely decompose the whole dynamics of multi-agent network (\ref{4}) into the consensus and disagreement dynamics due to the influence of the nonlinear dynamics of each agent. The following theorem presents an approach to determine the consensus dynamics and its initial state.\vspace{0.5em}

\begin{thm} \label{theorem1}
If multi-agent network (\ref{4}) achieves consensus, then the consensus function $c(t)$ satisfies that
\[\dot c(t) = Ac(t) + f\left( {c(t)} \right),\]
where
\[c(0) = \frac{1}{N}\sum\limits_{i = 1}^N {{x_i}(0)} .\]
\end{thm}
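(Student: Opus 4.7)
The plan is to start from the already-derived consensus-mode equation (\ref{9}), rescale it into an ODE for the population average, and then invoke the Lipschitz condition together with the consensus hypothesis to collapse the averaged nonlinearity down to $f(c(t))$. First, since $u_1 = \mathbf{1}_N/\sqrt{N}$, the coordinate $\tilde x_1(t) = (u_1^T \otimes I_d)x(t) = \frac{1}{\sqrt{N}}\sum_{i=1}^N x_i(t)$ is, up to a factor of $\sqrt{N}$, the population average. I will set $c(t) := \tilde x_1(t)/\sqrt{N} = \frac{1}{N}\sum_{i=1}^N x_i(t)$; then $c(0) = \frac{1}{N}\sum_i x_i(0)$ is immediate, and dividing (\ref{9}) by $\sqrt{N}$ gives the exact identity $\dot c(t) = Ac(t) + \frac{1}{N}\sum_{i=1}^N f(x_i(t))$. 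Equivalently, one may compute $\dot c$ directly from (\ref{1}) and note that $\sum_i u_i(t) = 0$ because $\mathbf{1}_N^T L_{\sigma(t)} = 0$ in the undirected setting, so the control terms cancel in the average.

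The second step is to identify the averaged nonlinearity with $f(c(t))$. By Definition \ref{definition1} there exists a reference function with $x_i(t) - c(t) \to 0$ for all $i$; I will take this reference to be the average $c$ itself, which is legitimate because $\bar x$ shares the common asymptotic limit of the $x_i$'s, hence is itself an admissible consensus function. The Lipschitz hypothesis then gives $\|f(x_i(t)) - f(c(t))\| \le \gamma \|x_i(t) - c(t)\| \to 0$, so $\frac{1}{N}\sum_i f(x_i(t)) \to f(c(t))$. Substituting this into the exact average ODE collapses the forcing term to $f(c(t))$ and yields the asserted $\dot c(t) = Ac(t) + f(c(t))$.

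The main obstacle I anticipate is conceptual rather than computational: equation (\ref{9}) delivers only the averaged forcing $\frac{1}{N}\sum_i f(x_i)$, whereas the statement asserts the cleaner forcing $f(c)$, and these coincide only on the consensus manifold or in the asymptotic regime. I plan to resolve this by exploiting the fact that the consensus function in Definition \ref{definition1} is pinned down only up to a function vanishing at infinity, so the average state and the unique solution of the stated ODE with initial value $\frac{1}{N}\sum_i x_i(0)$ represent the same consensus function; the clean ODE is therefore the canonical limiting dynamics along the consensus manifold, and this observation together with Step 1 fixes both the equation and the initial condition claimed in the theorem.
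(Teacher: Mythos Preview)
Your proposal is correct and follows essentially the same route as the paper's proof: define the candidate consensus function as the scaled first mode $\upsilon(t)=\tilde x_1(t)/\sqrt{N}$, obtain the exact average ODE from (\ref{9}), and then use the consensus hypothesis together with the Lipschitz condition to replace the averaged forcing $\frac{1}{N}\sum_i f(x_i(t))$ by $f(\upsilon(t))$ in the asymptotic regime. The paper resolves the transient/asymptotic discrepancy exactly as you suggest---by exploiting that the consensus function is determined only up to a term vanishing at infinity and then setting $c(t)=\upsilon(t)$.
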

\begin{proof}
Due to
\[{e_1} \otimes {\tilde x_1}(t) = {\left[ {\tilde x_1^T(t),0, \cdots ,0} \right]^T},\]
one can set that
\begin{eqnarray}\label{11}
{x_c}(t) \buildrel \Delta \over = \left( {{U_{\sigma (t)}} \otimes {I_d}} \right){\left[ {\tilde x_1^T(t),0, \cdots ,0} \right]^T} = \frac{1}{{\sqrt N }}{{\bf{1}}_N} \otimes {\tilde x_1}(t).
\end{eqnarray}
Since
\[\sum\limits_{i = 2}^N {{e_i} \otimes {{\tilde x}_i}(t)}  = {\left[ {0,\tilde x_2^T(t), \cdots ,\tilde x_N^T(t)} \right]^T},\]
it can be obtained that
\begin{eqnarray}\label{12}
{x_{\bar c}}(t) \buildrel \Delta \over = \left( {{U_{\sigma (t)}} \otimes {I_d}} \right){\left[ {0,\tilde x_2^T(t), \cdots ,\tilde x_N^T(t)} \right]^T} = \sum\limits_{i = 2}^N {{u_{\sigma (t),i}} \otimes {{\tilde x}_i}(t)}.
\end{eqnarray}
Because ${u_{\sigma (t),i}}$ $(i = 1,2, \cdots ,N)$ are linearly independent, ${x_c}(t){\rm{ }}$ and ${x_{\bar c}}(t){\rm{ }}$ are linearly independent by (\ref{11}) and (\ref{12}). Due to
\[\left( {U_{\sigma (t)}^T \otimes {I_d}} \right)x(t) = {\left[ {\tilde x_1^T(t),\tilde x_2^T(t), \cdots ,\tilde x_N^T(t)} \right]^T}{\rm{ }},\]
one can show that
\[x(t) = {x_c}(t) + {x_{\bar c}}(t){\rm{ }}.\]
According to the structure of ${x_c}(t)$ given in (\ref{11}), multi-agent network (\ref{4}) achieves consensus if and only if ${\lim _{t \to  + \infty }}{\left[ {\tilde x_2^T(t),\tilde x_3^T(t), \cdots ,\tilde x_N^T(t)} \right]^T} = 0$; that is, ${\lim _{t \to  + \infty }}\varsigma (t) = 0$. In this case, ${{{{\tilde x}_1}(t)} \mathord{\left/{\vphantom {{{{\tilde x}_1}(t)} {\sqrt N }}} \right.\kern-\nulldelimiterspace} {\sqrt N }}$ is a valid candidate of the consensus function.\par
If multi-agent network (\ref{4}) achieves consensus, then one can show that
\[\mathop {\lim }\limits_{t \to  + \infty } \left( {x(t) - {x_c}(t)} \right) = \mathop {\lim }\limits_{t \to  + \infty } \left( {x(t) - \frac{1}{{\sqrt N }}{{\bf{1}}_N} \otimes {{\tilde x}_1}(t)} \right) = 0.\]
Hence, one can obtain that
\[\mathop {\lim }\limits_{t \to  + \infty } \left( {c(t) - \frac{1}{{\sqrt N }}{{\tilde x}_1}(t)} \right) = 0.\]
Since multi-agent network (\ref{4}) achieves consensus, one has ${\lim _{t \to  + \infty }}\left( {{x_i}(t) - c(t)} \right) = 0$ $(i = 1,2, \cdots ,N)$, which means that
\[\mathop {\lim }\limits_{t \to  + \infty } \left( {f({x_i}(t)) - f\left( {\frac{1}{{\sqrt N }}{{\tilde x}_1}(t)} \right)} \right) = 0{\rm{ }} \hspace{2pt} (i = 1,2, \cdots ,N).\]
Let
\[\upsilon (t) = \frac{1}{{\sqrt N }}{\tilde x_1}(t).\]
Then, one can derive by (\ref{9}) that
\[\dot \upsilon (t) = A\upsilon (t) + f\left( {\upsilon (t)} \right).\]
Due to $\tilde x(t) = \left( {U_{\sigma (t)}^T \otimes {I_d}} \right)x(t)$ and ${\tilde x_1}(t) = \left( {e_1^T \otimes {I_d}} \right)\tilde x(t)$, one can obtain that
\[{\tilde x_1}(t) = \left( {e_1^TU_{\sigma (t)}^T \otimes {I_d}} \right)x(t).\]
Since $e_1^TU_{\sigma (t)}^T = {{{{\bf{1}}^T}} \mathord{\left/{\vphantom {{{{\bf{1}}^T}} {\sqrt N }}} \right.\kern-\nulldelimiterspace} {\sqrt N }}$, one has
\[{\tilde x_1}(0) = \frac{1}{{\sqrt N }}\sum\limits_{i = 1}^N {{x_i}(0)} ,\]
which means that
\[\upsilon (0) = \frac{1}{N}\sum\limits_{i = 1}^N {{x_i}(0)} .\]
For simplicity of expression, one can choose that $c(t) = \upsilon (t).$ Thus, the conclusion of Theorem \ref{theorem1} can be obtained.\vspace{0em}
\end{proof}

\begin{rmk}\label{remark1}
For leaderless multi-agent networks, an interesting and challenging problem is to determine the consensus dynamics, which is usually described by the consensus function. Xiao and Wang first introduced the concept of the consensus function in \cite{c15}, where it was assumed that the consensus function is time-invariant. In \cite{c27}, an initial state projection method was proposed to determine the time-varying consensus function. In \cite{c15} and \cite{c27}, interaction topologies are fixed and the dynamics of each agent is linear, so the whole dynamics of multi-agent networks can be completely decomposed by the nonsingular transformation. However, their methods are no longer valid when each agent contains nonlinear dynamics and interaction topologies are switching. Theorem \ref{theorem1} shows that the consensus dynamics of nonlinear multi-agent networks also contains the nonlinear term and switching topologies do not impact the consensus function.\vspace{-2.0em}
\end{rmk}

\section{Guaranteed-cost consensualization criteria}\label{section4}\vspace{-0.5em}
In this section, the impacts of the nonlinear term $\left( {\bar U_{\sigma (t)}^T \otimes {I_d}} \right)F(x(t))$ in (\ref{10}) are linearized by using the structure property of the transformation matrix ${U_{\sigma (t)}}$ and the Lipschitz condition and guaranteed-cost consensualization criteria are presented to determine the gain matrix $K$. Furthermore, an approach to obtain the minimum guaranteed cost is proposed.\par

Let ${\lambda _{\min }} = \min \left\{ {{\lambda _{i,2}}{\rm{ }}\left( {i = 1,2, \cdots ,\eta } \right)} \right\}$ and ${\lambda _{\max }} = \max \left\{ {{\lambda _{i,N}}{\rm{ }}\left( {i = 1,2, \cdots ,\eta } \right)} \right\}$, then the following theorem establishes a sufficient condition for guaranteed-cost consensualization on the basis of the Riccati equation, where the dimension of the variable is independent of the number of agents.

\begin{thm} \label{theorem2}
Multi-agent network (\ref{1}) is guaranteed-cost consensualizable by protocol (\ref{2}) with $K = {B^T}P$ if there exists ${P^T} = P > 0$ such that
\[\Xi  = {A^T}P + PA + P\left( {\lambda _{\max }^2BR{B^T} - 2{\lambda _{\min }}B{B^T} + {I_d}} \right)P + 3{\lambda _{\max }}Q + {\gamma ^2}{I_d} = 0.\]
In this case, the guaranteed cost satisfies that
\[\beta  = \frac{1}{N}{x^T}(0)\left( {\left( {N{I_N} - {{\bf{1}}_N}{\bf{1}}_N^T} \right) \otimes P} \right)x(0).\]
\end{thm}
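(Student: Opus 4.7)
The plan is to analyze the reduced disagreement dynamics (\ref{10}) via the Lyapunov-type function $V(t) = \varsigma^T(t)(I_{N-1}\otimes P)\varsigma(t)$. A crucial preliminary observation is that $\varsigma(t) = (\bar U_{\sigma(t)}^T \otimes I_d)x(t)$ together with $\bar U_{\sigma(t)}\bar U_{\sigma(t)}^T = I_N - u_1 u_1^T = I_N - \frac{1}{N}\mathbf{1}_N\mathbf{1}_N^T$ yields $V(t) = x^T(t)\bigl((I_N-\frac{1}{N}\mathbf{1}_N\mathbf{1}_N^T)\otimes P\bigr)x(t)$, which is independent of the switching signal $\sigma(t)$ and hence continuous across switching instants. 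This is the structure property highlighted in the introduction, and it is what legitimizes a single Lyapunov function for all topologies simultaneously.

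I would then differentiate $V$ on each switching interval along (\ref{10}). With $K = B^T P$, the linear part contributes $\varsigma^T[I_{N-1}\otimes(A^T P+PA) - 2\tilde\Lambda_{\sigma(t)}\otimes PBB^T P]\varsigma$; bounding $\tilde\Lambda_{\sigma(t)}\succeq \lambda_{\min}I_{N-1}$ produces the $-2\lambda_{\min} PBB^T P$ term appearing in $\Xi$. For the nonlinear cross term $2\varsigma^T(\bar U_{\sigma(t)}^T\otimes P)F(x(t))$, the key trick is $\bar U_{\sigma(t)}^T\mathbf{1}_N = 0$, which allows the substitution $F(x)\mapsto F(x) - \mathbf{1}_N\otimes f(c(t))$ at no cost. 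Young's inequality together with the Lipschitz bound $\|F(x) - \mathbf{1}_N\otimes f(c)\|^2 \le \gamma^2\|x - \mathbf{1}_N\otimes c\|^2 = \gamma^2\|\varsigma\|^2$ (the last equality uses $\bar U^T\bar U = I_{N-1}$ and the identification of $c(t)$ supplied by Theorem \ref{theorem1}) then linearizes the nonlinear impact and delivers the $PI_dP + \gamma^2 I_d$ contribution of $\Xi$.

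Expressing the cost integrand in $\varsigma$-coordinates gives $J_{Cu}(t) = \varsigma^T(\tilde\Lambda_{\sigma(t)}^2\otimes PBRB^T P)\varsigma \le \lambda_{\max}^2\varsigma^T(I_{N-1}\otimes PBRB^T P)\varsigma$, producing the $\lambda_{\max}^2 PBRB^T P$ term, and similarly $J_{Cx}(t)$ reduces to a form proportional to $\varsigma^T(\tilde\Lambda_{\sigma(t)}\otimes Q)\varsigma$ which, after edge-counting in the double sum plus an auxiliary Young's step, is dominated by the $3\lambda_{\max}Q$ term of $\Xi$. Summing and invoking the Riccati equation $\Xi = 0$ yields $\dot V(t) + J_{Cu}(t) + J_{Cx}(t)\le 0$ on every switching interval; continuity of $V$ across switches (observed above) lets us integrate over $[0,+\infty)$. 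Since $\Xi = 0$ with $P>0$ forces $\varsigma(t)\to 0$, and hence consensus by Theorem \ref{theorem1}, we have $V(+\infty) = 0$, whence $J_C \le V(0) = \frac{1}{N}x^T(0)\bigl((NI_N-\mathbf{1}_N\mathbf{1}_N^T)\otimes P\bigr)x(0) = \beta$.

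The main obstacle I expect is the nonlinear coupling: a direct bound either inherits an unwanted dependence on $\sigma(t)$ through $\bar U_{\sigma(t)}$ or loses dimensional consistency because $F(x)\in\mathbb{R}^{Nd}$ while $\varsigma\in\mathbb{R}^{(N-1)d}$. The resolution is precisely the orthogonality $\bar U_{\sigma(t)}^T\mathbf{1}_N = 0$, which lets the consensus component $\mathbf{1}_N\otimes f(c)$ be absorbed freely and leaves only a Lipschitz-controlled fluctuation; combined with the switching-independence of $\bar U_{\sigma(t)}\bar U_{\sigma(t)}^T$, this collapses the entire analysis onto the fixed-dimension Riccati equation $\Xi = 0$ and, as a by-product, forces the guaranteed-cost upper bound to take the switching-free form $\frac{1}{N}x^T(0)\bigl((NI_N-\mathbf{1}_N\mathbf{1}_N^T)\otimes P\bigr)x(0)$.
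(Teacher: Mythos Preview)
Your approach is essentially the paper's: the same Lyapunov function $V=\varsigma^T(I_{N-1}\otimes P)\varsigma$, the same Young/Lipschitz linearization of the nonlinear cross term via $\bar U_{\sigma(t)}\bar U_{\sigma(t)}^T=I_N-\frac{1}{N}\mathbf{1}_N\mathbf{1}_N^T$, and the same integration of $\dot V+J_{Cu}+J_{Cx}\le 0$ against $\Xi=0$ to produce $J_C\le V(0)=\beta$. The paper phrases the nonlinear bound in the equivalent pairwise form $\frac{1}{2N}\sum_{i,j}\|f(x_i)-f(x_j)\|^2$ rather than your centered form, but the outcomes coincide.

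Two small clarifications are worth making. First, the vector $c(t)$ you need in the Lipschitz step is the \emph{instantaneous average} $\frac{1}{N}\sum_i x_i(t)=\frac{1}{\sqrt N}\tilde x_1(t)$; this is what the \emph{proof} (not the statement) of Theorem~\ref{theorem1} identifies, and it is what makes $\|x-\mathbf{1}_N\otimes c\|=\|\varsigma\|$ an exact identity---the statement of Theorem~\ref{theorem1} itself presupposes consensus and so cannot be invoked to establish it. Second, $J_{Cx}(t)=2x^T(L_{\sigma(t)}\otimes Q)x=2\varsigma^T(\tilde\Lambda_{\sigma(t)}\otimes Q)\varsigma$ is a direct Laplacian identity (each edge contributes twice in the double sum), not a Young's step; the coefficient $3\lambda_{\max}$ in $\Xi$ is simply slack over the $2\lambda_{\max}$ that this bound actually requires.
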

\begin{proof}
Consider the following Lyapunov function candidate
\[V(t) = {\varsigma ^T}(t)\left( {{I_{N - 1}} \otimes P} \right)\varsigma (t),\]
where $P$ is a solution of the Riccati equation $\Xi  = 0$. Taking the derivative of $V(t)$ with respect to $t$ along the solution of subsystem (\ref{10}), one can derive that
\begin{eqnarray}\label{13}
\begin{array}{l}
\dot V(t) = \sum\limits_{i = 2}^N {\tilde x_i^T(t)\left( {{A^T}P + PA - {\lambda _{\sigma (t),i}}{B^T}{K^T}P - {\lambda _{\sigma (t),i}}PBK} \right){{\tilde x}_i}(t)} \\ \hspace{33pt}
 + 2{\varsigma ^T}(t)\left( {\bar U_{\sigma (t)}^T \otimes P} \right)F(x(t)).
\end{array}
\end{eqnarray}
It can be shown that
\begin{eqnarray}\label{14}
2{\varsigma ^T}(t)\left( {\bar U_{\sigma (t)}^T \otimes P} \right)F(x(t)) \le \sum\limits_{i = 2}^N {\tilde x_i^T(t)PP{{\tilde x}_i}(t)}  + {F^T}(x(t))\left( {{{\bar U}_{\sigma (t)}}\bar U_{\sigma (t)}^T \otimes {I_d}} \right){\rm{ }}F(x(t)).
\end{eqnarray}
Due to ${U_{\sigma (t)}}U_{\sigma (t)}^T = {I_N}$ , one can show that
\[{\bar U_{\sigma (t)}}\bar U_{\sigma (t)}^T = \frac{1}{N}\left( {N{I_N} - {{\bf{1}}_N}{\bf{1}}_N^T} \right).\]
Thus, one has
\begin{eqnarray}\label{15}
{F^T}(x(t))\left( {{{\bar U}_{\sigma (t)}}\bar U_{\sigma (t)}^T \otimes {I_d}} \right){\rm{ }}F(x(t)) = \frac{1}{{2N}}\sum\limits_{i = 1}^N {\sum\limits_{j = 1}^N {{{\left\| {f({x_i}(t)) - f({x_j}(t))} \right\|}^2}} }.
\end{eqnarray}
Since
\[\left\| {f({x_i}(t)) - f({x_j}(t))} \right\| \le \gamma \left\| {{x_i}(t) - {x_j}(t)} \right\|{\rm{ }}\left( {i,j = 1,2, \cdots ,N} \right),\]
it can be obtained that
$$
\frac{1}{{2N}}\sum\limits_{i = 1}^N {\sum\limits_{j = 1}^N {{{\left\| {f({x_i}(t)) - f({x_j}(t))} \right\|}^2}} }  \le \frac{{{\gamma ^2}}}{{2N}}\sum\limits_{i = 1}^N {\sum\limits_{j = 1}^N {{{\left\| {{x_i}(t) - {x_j}(t)} \right\|}^2}} }
$$\vspace{-2em}
\begin{eqnarray}\label{16}
 \hspace{18em} = {\gamma ^2}{x^T}(t)\left( {{{\bar U}_{\sigma (t)}}\bar U_{\sigma (t)}^T \otimes {I_d}} \right){\rm{ }}x(t).
\end{eqnarray}
Due to
\[\bar U_{\sigma (t)}^T \otimes {I_d} = \left[ {0,{I_{(N - 1)d}}} \right]\left( {U_{\sigma (t)}^T \otimes {I_d}} \right),\]
it can be derived that
\begin{eqnarray}\label{17}
{\gamma ^2}{x^T}(t)\left( {{{\bar U}_{\sigma (t)}}\bar U_{\sigma (t)}^T \otimes {I_d}} \right){\rm{ }}x(t) = {\gamma ^2}\sum\limits_{i = 2}^N {\tilde x_i^T(t){{\tilde x}_i}(t)} .
\end{eqnarray}
From (\ref{13}) to (\ref{17}), one has
\begin{eqnarray}\label{18}
\dot V(t) \le \sum\limits_{i = 2}^N {\tilde x_i^T(t)\left( {{A^T}P + PA - {\lambda _{\sigma (t),i}}{K^T}{B^T}P - {\lambda _{\sigma (t),i}}PBK + PP + {\gamma ^2}{I_d}} \right){{\tilde x}_i}(t).}
\end{eqnarray}
Let $K = {B^T}P$, then it can be shown by $\Xi  = 0$ and (\ref{18}) that
\[\dot V(t) \le \sum\limits_{i = 2}^N {\tilde x_i^T(t)\left( {2({\lambda _{\min }} - {\lambda _{\sigma (t),i}})P{B^T}BP - \lambda _{\max }^2PBR{B^T}P - 3{\lambda _{\max }}Q} \right){{\tilde x}_i}(t)} .\]
Due to ${\lambda _{\min }} - {\lambda _{\sigma (t),i}} \le 0{\rm{ }}(i = 2,3, \cdots ,N)$ and ${\lambda _{\max }} > 0$, it can be derived that $\dot V(t) \le 0$ and $\dot V(t) \equiv 0$ if and only if ${\tilde x_i}(t) \equiv 0{\rm{ }}(i = 2,3, \cdots ,N)$; that is, ${\lim _{t \to {\rm{ + }}\infty }}\varsigma (t) = 0$. By the proof of Theorem \ref{theorem1}, multi-agent network (\ref{4}) achieves consensus.\par

Next, we analyze the guaranteed-cost performance of the gain matrix $K = {B^T}P$. It can be derived that
\begin{eqnarray}\label{19}
{J_{Cu}}(t) = {x^T}(t)\left( {L_{\sigma (t)}^2 \otimes PBR{B^T}P} \right)x(t),
\end{eqnarray}
\begin{eqnarray}\label{20}
{J_{Cx}}(t) = {x^T}(t)\left( {2{L_{\sigma (t)}} \otimes Q} \right)x(t).
\end{eqnarray}
Due to ${\lambda _{\sigma (t),1}} = 0$ and $\tilde x(t) = \left( {U_{\sigma (t)}^T \otimes {I_d}} \right)x(t)$, one has
\begin{eqnarray}\label{21}
{x^T}(t)\left( {L_{\sigma (t)}^2 \otimes PBR{B^T}P} \right)x(t) = \sum\limits_{i = 2}^N {\lambda _{\sigma (t),i}^2\tilde x_i^T(t)PBR{B^T}P{{\tilde x}_i}(t)} ,
\end{eqnarray}\vspace{-2em}
\begin{eqnarray}\label{22}
{x^T}(t)\left( {2{L_{\sigma (t)}} \otimes Q} \right)x(t) = \sum\limits_{i = 2}^N {2{\lambda _{\sigma (t),i}}\tilde x_i^T(t)Q{{\tilde x}_i}(t)} .
\end{eqnarray}
Let $T \ge 0$, then it can be obtained from (\ref{19}) to (\ref{22}) that
\[{J_T} \buildrel \Delta \over = \int_0^T {\left( {{J_{Cu}}(t) + {J_{Cx}}(t)} \right){\rm{d}}t}  = \sum\limits_{i = 2}^N {\int_0^T {\tilde x_i^T(t)\left( {2{\lambda _{\sigma (t),i}}Q + \lambda _{\sigma (t),i}^2PBR{B^T}P} \right){{\tilde x}_i}(t){\rm{d}}t} } .\]
Thus, one can derive that
\begin{eqnarray}\label{23}
\begin{array}{l}
{J_T} = \sum\limits_{i = 2}^N {\int_0^T {\tilde x_i^T(t)\left( {2{\lambda _{\sigma (t),i}}Q + \lambda _{\sigma (t),i}^2PBR{B^T}P} \right){{\tilde x}_i}(t){\rm{d}}t} }  + \int_0^T {\dot V(t)} {\rm{d}}t - V(T) + V(0)\\ \hspace{12pt}
{\rm{    }} \le \sum\limits_{i = 2}^N {\int_0^T {\tilde x_i^T(t)(2({\lambda _{\min }} - {\lambda _{\sigma (t),i}})P{B^T}BP + (\lambda _{\sigma (t),i}^2 - \lambda _{\max }^2)PBR{B^T}P} } \\ \hspace{23pt}
{\rm{       }} + (2{\lambda _{\sigma (t),i}} - 3{\lambda _{\max }})Q){{\tilde x}_i}(t){\rm{d}}t - V(T) + V(0)\\ \hspace{12pt}
{\rm{    }} \le V(0).
\end{array}
\end{eqnarray}
Due to ${\bar U_{\sigma (t)}}\bar U_{\sigma (t)}^T = {{\left( {N{I_N} - {{\bf{1}}_N}{\bf{1}}_N^T} \right)} \mathord{\left/
 {\vphantom {{\left( {N{I_N} - {{\bf{1}}_N}{\bf{1}}_N^T} \right)} N}} \right.
 \kern-\nulldelimiterspace} N}$ and $\varsigma (t) = \left[ {0,{I_{(N - 1)d}}} \right]\left( {U_{\sigma (t)}^T \otimes {I_d}} \right)x(t)$, it can be obtained that
\begin{eqnarray}\label{24}
{\varsigma ^T}(0)\left( {{I_{N - 1}} \otimes P} \right)\varsigma (0) = \frac{1}{N}{x^T}(0)\left( {\left( {N{I_N} - {{\bf{1}}_N}{\bf{1}}_N^T} \right) \otimes P} \right)x(0).
\end{eqnarray}
By (\ref{23}) and (\ref{24}), let $T \to  + \infty $, then the conclusion of Theorem \ref{theorem2} can be obtained.
\end{proof}

\begin{rmk}\label{remark2}
For linear multi-agent networks, by using an important property of the Laplacian matrix that its row sum is equal to zero, the dynamics of the whole network can be completely decomposed as the consensus and disagreement parts, which can used to determine the consensus function and the consensualization criteria, respectively. However, for nonlinear multi-agent networks, both the consensus and disagreement dynamics contain nonlinear terms as shown in (\ref{9}) and (\ref{10}). To linearize the influence of the nonlinear term, Theorem \ref{theorem2} applies the structure property of the time-varying part ${\bar U_{\sigma (t)}}$ of the orthonormal transformation matrix ${U_{\sigma (t)}}$; that is, ${\bar U_{\sigma (t)}}\bar U_{\sigma (t)}^T = {N^{ - 1}}\left( {N{I_N} - {{\bf{1}}_N}{\bf{1}}_N^T} \right)$. It should be pointed out that the influence of the nonlinear term cannot be dealt with by the methods in \cite{c22}-\cite{c24}, where some interesting and important guaranteed-cost consensualization results were proposed.
\end{rmk}

To ensure that $\Xi  = 0$ in Theorem \ref{theorem2} has a positive definite solution, ${A^T}P + PA - 2{\lambda _{\min }}B{B^T} < 0$ is necessary, which means that $(A,B)$ is stabilizable due to ${\lambda _{\min }} > 0$. Furthermore, these terms $PP + {\gamma ^2}{I_d}$ and $\lambda _{\max }^2PBR{B^T}P + 3{\lambda _{\max }}Q$ in $\Xi $ represent the impacts of the nonlinear dynamics and the cost function on guaranteed-cost consensualization, respectively. If ${\lambda _{R,\max }} < 2{\lambda _{\min }}\lambda _{\max }^{ - 2}$ with ${\lambda _{R,\max }} > 0$ denoting the maximum eigenvalue of the matrix $R$, then $\lambda _{\max }^2BR{B^T} - 2{\lambda _{\min }}B{B^T} < 0$. In this case, $\Xi  - PP - {\gamma ^2}{I_d} = 0$ has a unique and positive definite solution. Thus, when the nonlinear dynamics does not exist; that is, $f({x_i}(t)) \equiv 0$ $(i = 1,2, \cdots ,N)$, the following theorem presents a sufficient condition for guaranteed-cost consensualization, which is a more perfect result for guaranteed-cost consensus design than the one in \cite{c24}, where the existence and uniqueness of the solutions of LMI guaranteed-cost consensualization criteria for linear multi-agent networks with switching topologies cannot be ensured.\par

\begin{thm}\label{theorem3}
Multi-agent network (\ref{1}) with $f({x_i}(t)) \equiv 0$ $(i = 1,2, \cdots ,N)$ is guaranteed-cost consensualizable by protocol (\ref{2}) if ${\lambda _{R,\max }} < 2{\lambda _{\min }}\lambda _{\max }^{ - 2}$ and $(A,B)$ is stabilizable. In this case, the gain matrix is $K = {B^T}P$ with ${P^T} = P > 0$ the solution of ${A^T}P + PA + P\left( {\lambda _{\max }^2BR{B^T} - 2{\lambda _{\min }}B{B^T}} \right)P + 3{\lambda _{\max }}Q = 0$ and the guaranteed cost satisfies that $\beta  = {N^{ - 1}}{x^T}(0)\left( {\left( {N{I_N} - {{\bf{1}}_N}{\bf{1}}_N^T} \right) \otimes P} \right)x(0).$
\end{thm}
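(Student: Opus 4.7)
The plan is to treat Theorem \ref{theorem3} as the linear specialization of Theorem \ref{theorem2} and to split the work into two parts: (a) redoing the Lyapunov argument without the Lipschitz cross term, and (b) verifying that the resulting algebraic Riccati equation indeed admits a unique positive definite solution under the given hypotheses.

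First, I would revisit the proof of Theorem \ref{theorem2} with $f(x_i(t))\equiv 0$ throughout. In that setting, $F(x(t))\equiv 0$, so the cross term $2\varsigma^T(t)(\bar U_{\sigma(t)}^T\otimes P)F(x(t))$ in equation (\ref{13}) vanishes identically and no Young-type bound is required; the whole contribution $PP+\gamma^{2}I_d$ that previously appeared from linearizing the nonlinearity drops out. Keeping the same Lyapunov candidate $V(t)=\varsigma^T(t)(I_{N-1}\otimes P)\varsigma(t)$ and the same gain $K=B^TP$, the derivative along (\ref{10}) reduces to a sum of $\tilde x_i^T(t)(A^TP+PA-2\lambda_{\sigma(t),i}PBB^TP)\tilde x_i(t)$ for $i\ge 2$. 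Combining this with the cost-integrand decomposition (\ref{21})--(\ref{22}) and using $\lambda_{\min}\le\lambda_{\sigma(t),i}\le\lambda_{\max}$ as in (\ref{23}), the integrand of $J_T+V(T)-V(0)$ is dominated by $\tilde x_i^T(t)\Xi_0\tilde x_i(t)$ with $\Xi_0:=A^TP+PA+P(\lambda_{\max}^2BRB^T-2\lambda_{\min}BB^T)P+3\lambda_{\max}Q$, which is precisely the left-hand side in the theorem's Riccati equation. Setting $\Xi_0=0$ therefore yields $\dot V(t)\le 0$, consensus via $\varsigma(t)\to 0$ and Theorem \ref{theorem1}, and the cost bound $J_C\le V(0)=N^{-1}x^T(0)((NI_N-\mathbf{1}_N\mathbf{1}_N^T)\otimes P)x(0)$ exactly as in (\ref{24}).

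Second, I would verify that the Riccati equation has a unique symmetric positive definite solution. Rewrite it as $A^TP+PA-PSP+W=0$ with $S:=B(2\lambda_{\min}I_p-\lambda_{\max}^2R)B^T$ and $W:=3\lambda_{\max}Q$. Since $R=R^T>0$ with $\lambda_{R,\max}$ its largest eigenvalue, the assumption $\lambda_{R,\max}<2\lambda_{\min}\lambda_{\max}^{-2}$ gives $2\lambda_{\min}I_p-\lambda_{\max}^2R>0$, so $S=\tilde B\tilde B^T$ where $\tilde B:=B(2\lambda_{\min}I_p-\lambda_{\max}^2R)^{1/2}$ has the same image as $B$. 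Consequently $(A,\tilde B)$ inherits stabilizability from $(A,B)$, while $W>0$ forces $(A,W^{1/2})$ to be detectable (indeed observable). Standard algebraic Riccati theory then provides a unique stabilizing solution $P^T=P>0$, which renders the gain $K=B^TP$ well-defined and legitimizes the substitution used in Step~1.

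The main obstacle is the second step: the Lyapunov bookkeeping is a routine restriction of Theorem \ref{theorem2}, but the real content is repackaging the quadratic coefficient as $\tilde B\tilde B^T$ so that stabilizability of $(A,B)$ transfers to the form required by the classical Riccati existence and uniqueness theorem. Once this factorization is in place, nothing further is needed: the Lyapunov chain, the consensus conclusion from Theorem \ref{theorem1}, and the guaranteed-cost identity (\ref{24}) all transcribe directly from the proof of Theorem \ref{theorem2}.
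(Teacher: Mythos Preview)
Your proposal is correct and follows essentially the same route as the paper: the paper also treats Theorem~\ref{theorem3} as the linear specialization of Theorem~\ref{theorem2}, observing that with $f\equiv 0$ the terms $PP+\gamma^2 I_d$ disappear from $\Xi$, and that the hypothesis $\lambda_{R,\max}<2\lambda_{\min}\lambda_{\max}^{-2}$ forces $\lambda_{\max}^2BRB^T-2\lambda_{\min}BB^T\le 0$, after which it simply asserts that the reduced Riccati equation has a unique positive definite solution. Your Step~(b) supplies the justification the paper omits---the factorization $S=\tilde B\tilde B^T$ with $\tilde B=B(2\lambda_{\min}I_p-\lambda_{\max}^2R)^{1/2}$, transfer of stabilizability, and observability from $W>0$---so your argument is a more complete version of the paper's own sketch rather than a different approach.
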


Moreover, it is a very interesting problem to determine the minimum guaranteed cost by choosing a proper matrix $P$. Because $\beta $ is associated with the initial states ${x_i}(0){\rm{ }}(i = 1,2, \cdots ,N)$, it is difficult to obtain the minimum guaranteed cost ${\beta ^ * }$. Here, the guaranteed cost is minimum in the sense that the initial state error between any two agents is a random variable with the zero mean value and $E\left\{ {({x_j}(0) - {x_i}(0)){{({x_j}(0) - {x_i}(0))}^T}} \right\} = {I_d}$, which is a similar assumption to optimization control of isolated systems as shown \cite{c28}. In this case, one can obtain that
\[
~~~~{\rm{  }}E\left\{ {\frac{1}{N}{x^T}(0)\left( {\left( {N{I_N} - {{\bf{1}}_N}{\bf{1}}_N^T} \right) \otimes P} \right)x(0)} \right\}
\]\vspace{-1.2em}
\[
\hspace{3.9em}= \frac{1}{{2N}}E\left\{ {\sum\limits_{i = 1}^N {\sum\limits_{j = 1}^N {{{\left( {{x_j}(0) - {x_i}(0)} \right)}^T}P\left( {{x_j}(0) - {x_i}(0)} \right)} } } \right\}
\]\vspace{-0.5em}
\[
 \hspace{-12em}= \frac{N}{2}{\rm{trace}}\left\{ P \right\}.
\]
Let ${\beta ^ * } = E\left\{ \beta  \right\}$, then ${\beta ^ * } = 0.5N \cdot {\rm{trace}}\left\{ P \right\}.$ From Theorem \ref{theorem2}, by Schur complement lemma in \cite{c29}, the following theorem presents an approach to determine the minimum guaranteed cost in terms of LMIs.

\begin{thm}\label{theorem4}
Multi-agent network (\ref{1}) is guaranteed-cost consensualizable by protocol (\ref{2}) with the minimum guaranteed cost if there exist ${\tilde P^T} = \tilde P > 0$ and ${\tilde X^T} = \tilde X > 0$ such that\par
\hspace{36pt} min ${\rm{trace}}\left( {\tilde X} \right)$\par
\hspace{36pt} s.t.
\[\begin{array}{l}
\left[ {\begin{array}{*{20}{c}}
{\tilde X}&{{I_d}}\\
*&{\tilde P}
\end{array}} \right] > 0,\\
\left[ {\begin{array}{*{20}{c}}
{\tilde P{A^T} + A\tilde P - \lambda _{\max }^2BR{B^T} - 2{\lambda _{\min }}B{B^T} + {I_d}}&{3{\lambda _{\max }}\tilde PQ}&{\gamma \tilde P}\\
*&{ - 3{\lambda _{\max }}Q}&0\\
*&*&{{-I_d}}
\end{array}} \right] < 0.
\end{array}\]
In this case, $K = {B^T}{\tilde P^{ - 1}}$ and the minimum guaranteed cost is ${\beta ^ * } = 0.5N \cdot {\rm{trace}}\left\{ {\tilde X} \right\}.$
\end{thm}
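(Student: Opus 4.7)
The plan is to convert the Riccati-based condition of Theorem \ref{theorem2} into an LMI by a congruence transformation with $\tilde P = P^{-1}$ followed by two Schur complement reductions, and then to linearize the trace cost $0.5N\cdot\textup{trace}\{P\}$ by introducing a slack upper bound $\tilde X$ for $\tilde P^{-1}$. Throughout, I would replace the Riccati equality $\Xi = 0$ in Theorem \ref{theorem2} by the strict inequality $\Xi < 0$; inspection of the estimate for $\dot V(t)$ in the proof of Theorem \ref{theorem2} shows that strict negativity is still enough to ensure $\varsigma(t)\to 0$ and the same guaranteed-cost bound on $\beta$.

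First, I would pre- and post-multiply $\Xi<0$ by $\tilde P$. This removes the $P(\cdot)P$ sandwich around $\lambda_{\max}^2 BRB^T - 2\lambda_{\min}BB^T + I_d$ and converts the constant summand $3\lambda_{\max}Q + \gamma^2 I_d$ into the two quadratic-in-$\tilde P$ pieces $3\lambda_{\max}\tilde P Q \tilde P$ and $\gamma^2 \tilde P \tilde P$. I would then invoke the Schur complement lemma from \cite{c29} twice: a block $-3\lambda_{\max}Q$ paired with off-diagonal $3\lambda_{\max}\tilde P Q$ absorbs $3\lambda_{\max}\tilde P Q \tilde P$, and a block $-I_d$ paired with off-diagonal $\gamma\tilde P$ absorbs $\gamma^2 \tilde P \tilde P$. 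The resulting $3\times 3$ block inequality is precisely the second LMI in the statement, and the gain matrix is recovered as $K = B^T\tilde P^{-1} = B^T P$, so Theorem \ref{theorem2} yields guaranteed-cost consensus with $\beta \le N^{-1}x^T(0)((NI_N - \mathbf{1}_N\mathbf{1}_N^T)\otimes P)x(0)$.

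For the cost minimization, I would start from $\beta^* = 0.5N\cdot\textup{trace}\{P\} = 0.5N\cdot\textup{trace}\{\tilde P^{-1}\}$ as already derived in the paragraph preceding the theorem. Since $\textup{trace}\{\tilde P^{-1}\}$ is nonconvex in $\tilde P$, I would introduce $\tilde X^T = \tilde X > 0$ with $\tilde X > \tilde P^{-1}$; by one more application of Schur complement this inequality is equivalent to the first LMI in the theorem. Trace monotonicity then gives $\textup{trace}\{P\} \le \textup{trace}\{\tilde X\}$, so minimizing $\textup{trace}\{\tilde X\}$ jointly over the pair $(\tilde P,\tilde X)$ subject to the two LMIs produces the smallest admissible upper bound, which is the claimed $0.5N\cdot\textup{trace}\{\tilde X\}$.

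The main obstacle will be the double Schur complement: the off-diagonal signs, the choice of which quadratic term is absorbed by which pivot block, and the placement of the factor $\gamma$ versus $\gamma^2$ have to be tracked carefully so that the final $3\times 3$ matrix matches the stated form. A secondary point requiring care is justifying that the LMI reformulation is not merely sufficient but equivalent to the existence of some $P>0$ with $\Xi<0$, so that no conservativeness is introduced between the Riccati condition of Theorem \ref{theorem2} and the LMI optimization, and the minimum of $\textup{trace}\{\tilde X\}$ really does coincide with the infimum of $\textup{trace}\{P\}$ over admissible $P$.
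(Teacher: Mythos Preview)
Your approach is exactly the one the paper intends: its entire proof is the single sentence ``From Theorem \ref{theorem2}, by Schur complement lemma in \cite{c29}'', and your proposal is a correct, detailed unpacking of that hint --- congruence by $\tilde P=P^{-1}$, two Schur complements for the $3\times3$ LMI, and a slack variable $\tilde X>\tilde P^{-1}$ for the trace objective. One small point: if you carry out the congruence step as you describe, the $(1,1)$ block will contain $+\lambda_{\max}^2 BRB^T$, not the $-\lambda_{\max}^2 BRB^T$ printed in the statement, so your claim that the result is ``precisely'' the stated LMI needs a caveat; this is a sign typo in the paper rather than a flaw in your argument.
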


\begin{rmk}\label{remark3}
Intuitively speaking, the guaranteed cost $\beta $ increases as the number of agents increases since both $R$ and $Q$ are positive. However, the relationship between the guaranteed cost and the number of agents cannot be directly reflected by the explicit expression of the guaranteed cost $\beta $ in Theorem \ref{theorem2}, where both $N$ and ${N^{ - 1}}$ appear in $\beta $. Actually, the matrix ${N^{ - 1}}\left( {N{I_N} - {{\bf{1}}_N}{\bf{1}}_N^T} \right)$ is equivalent to the Laplacian matrix of a complete graph with the weights of all the edges ${N^{ - 1}}$, which has a simple zero eigenvalue and whose nonzero eigenvalues are one. Hence, the guaranteed cost is directly proportional to the number of agents as $x(0)$ and $P$ are given, which is also coincident with the conclusion of Theorem \ref{theorem4} when the minimum guaranteed cost is considered.
\end{rmk}

\section{Numerical simulations}\label{section5}
Consider a Lipschitz nonlinear multi-agent network with six agents and the dynamics of each agent is described by (\ref{1}) with
\[A = \left[ {\begin{array}{*{20}{c}}
0&1&0&0\\
{ - 48.60}&{ - 1.25}&{48.60}&0\\
0&0&0&1\\
{19.50}&0&{ - 19.50}&0
\end{array}} \right],
B = \left[ \begin{array}{c}
0\\
21.60\\
0\\
0
\end{array} \right],
f({x_i}) = \left[ \begin{array}{c}
0\\
0\\
0\\
 - \gamma \sin ({x_{i3}})
\end{array} \right],\]
where ${x_i} = {\left[ {{x_{i1}},{\rm{ }}{x_{i2}},{\rm{ }}{x_{i3}},{\rm{ }}{x_{i4}}} \right]^T}$ with $i = 1,2,...,6$ and $\gamma  = 0.333$. Figure 1 gives four undirected interaction topologies in the switching set and the switching signal $\sigma (t)$ is shown in Figure 2.\par

\begin{figure}[!htb]
\begin{center}
\scalebox{0.9}[0.9]{\includegraphics{./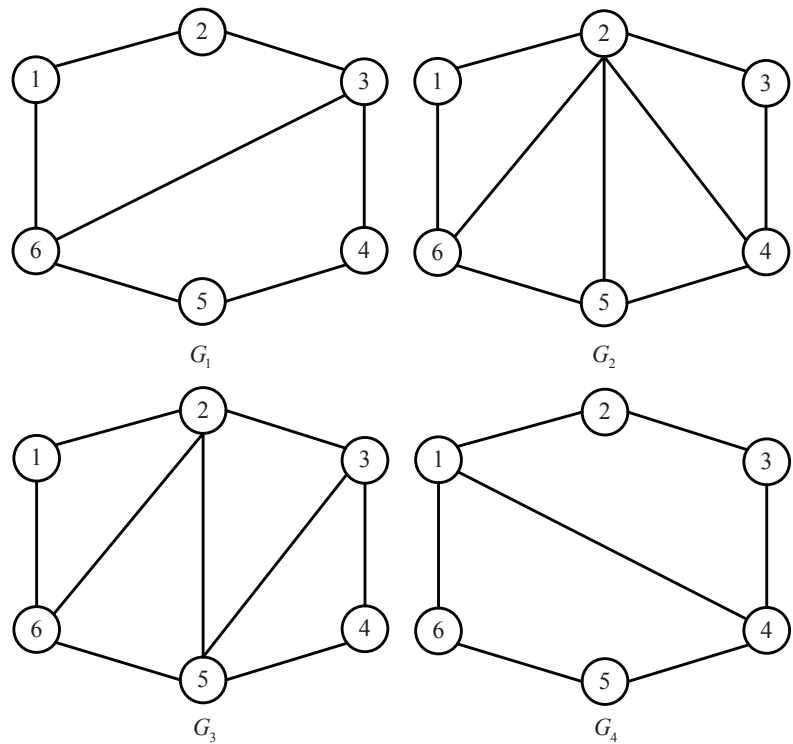}}
\vspace{0em}
\caption{Switching topology set.}
\end{center}\vspace{0em}
\end{figure}

\begin{figure}[!htb]
\begin{center}
\scalebox{0.5}[0.5]{\includegraphics{./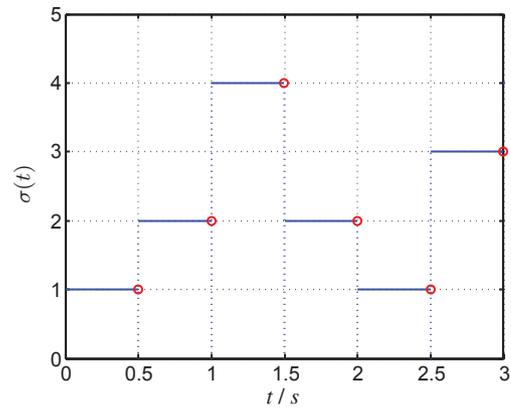}} \put (-200, 70)
{\rotatebox{90} {{\scriptsize $\sigma (t)$}}} \put (-108, -5) {{ \scriptsize {\it t}~/~\it s}}
\vspace{0em} \caption{Switching signal $\sigma (t)$.}
\end{center}\vspace{-2em}
\end{figure}

\begin{figure}[!htb]
\begin{center}
\scalebox{0.45}[0.45]{\includegraphics{./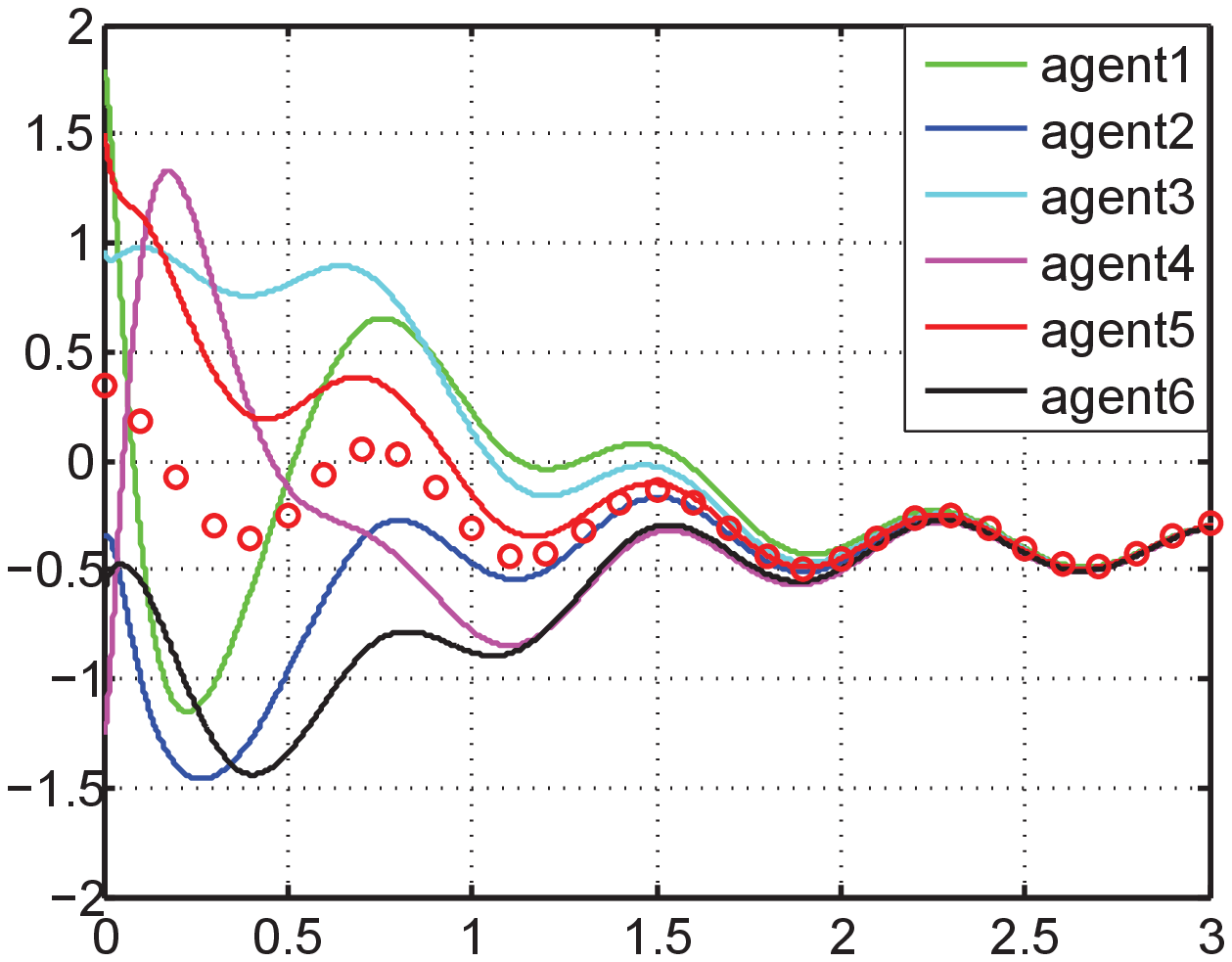}}
\scalebox{0.45}[0.45]{\includegraphics{./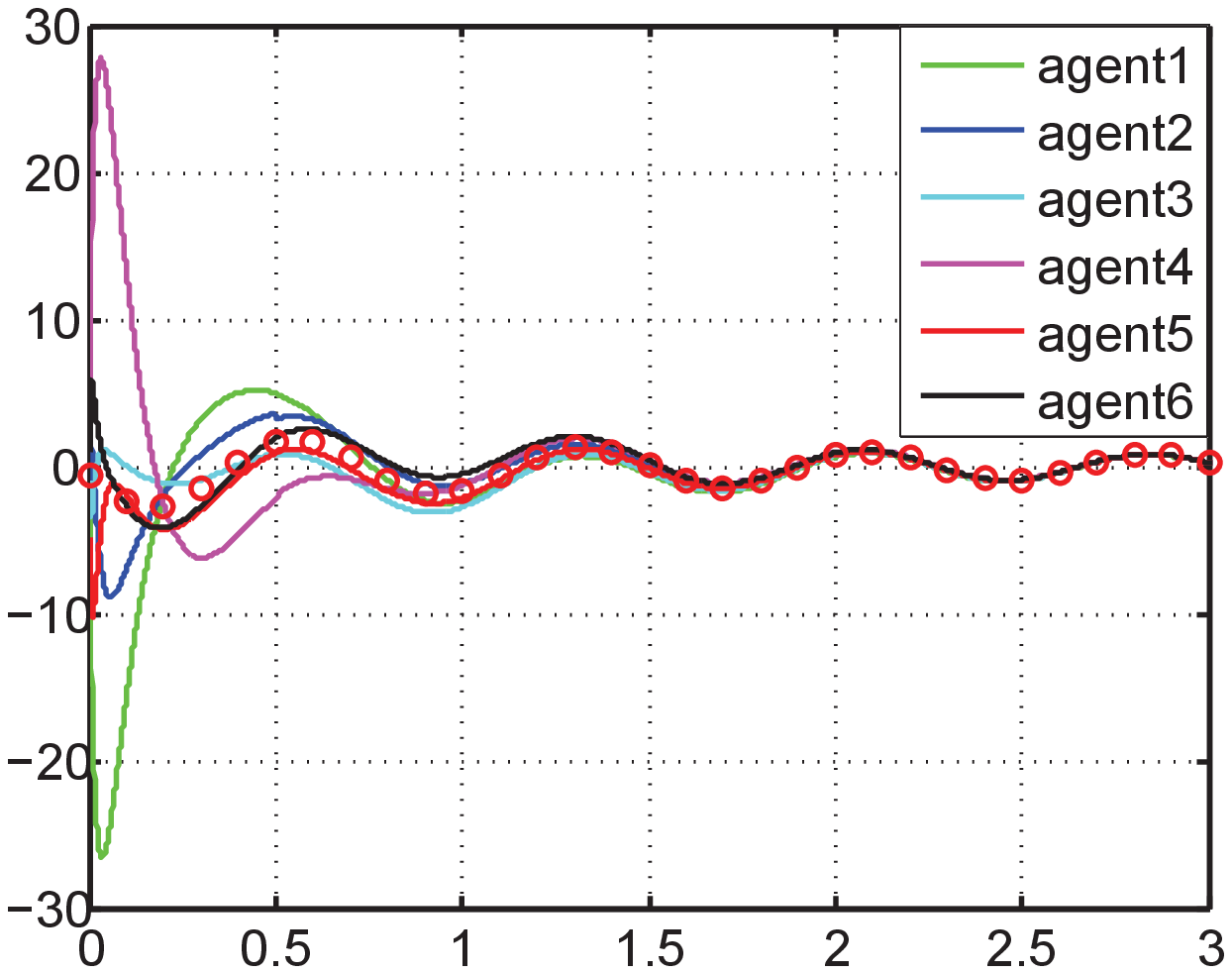}}
\put (-370, 60)
{\rotatebox{90} {{\scriptsize ${x_1}(t)$}}} \put (-280, -5) {{ \scriptsize {\it t}~/~\it s}}
\put (-185, 60)
{\rotatebox{90} {{\scriptsize ${x_2}(t)$}}} \put (-97, -5) {{ \scriptsize {\it t}~/~\it s}}

\end{center}\vspace{-2em}
\end{figure}

\begin{figure}[!htb]
\begin{center}
\scalebox{0.45}[0.45]{\includegraphics{./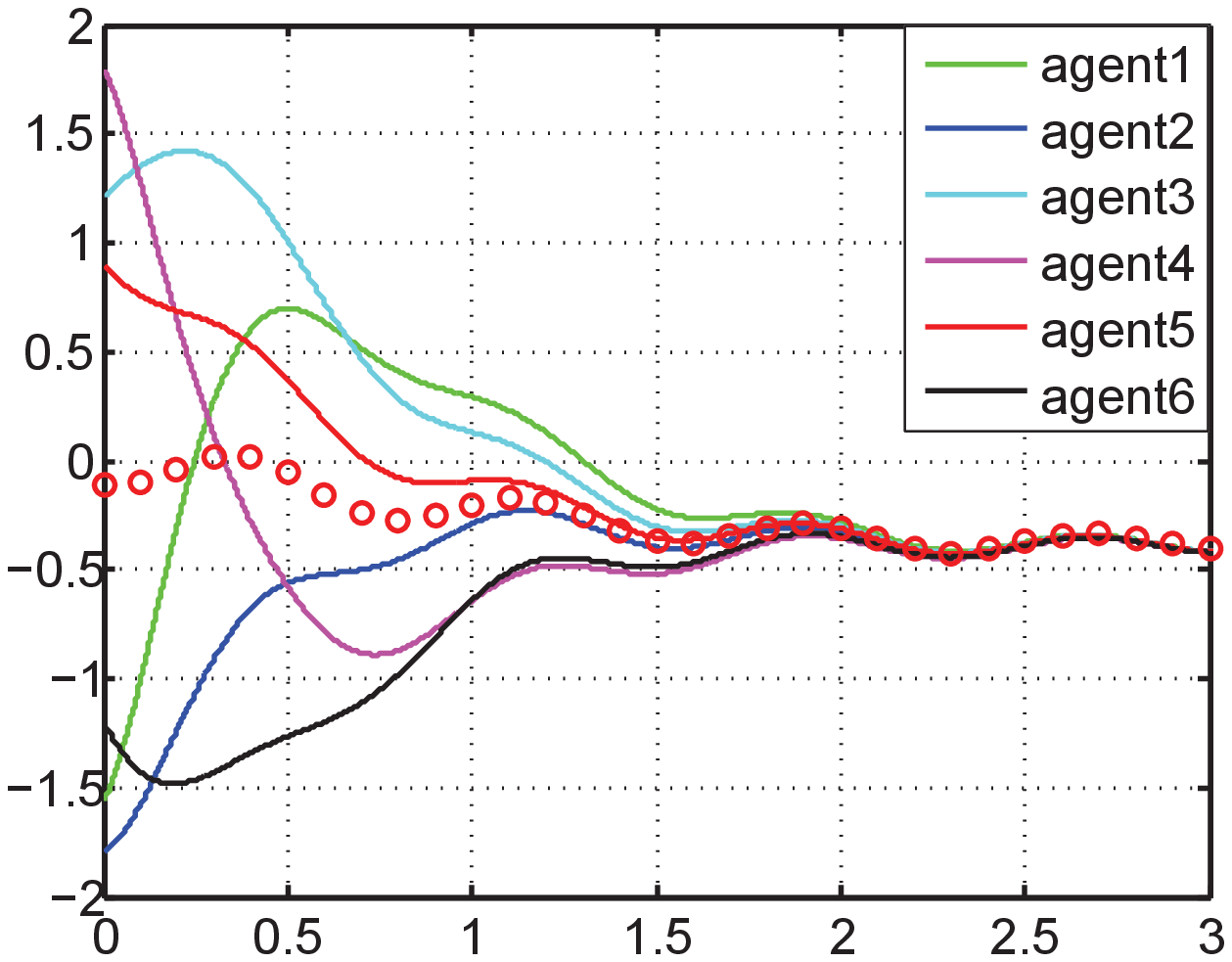}}
\scalebox{0.45}[0.45]{\includegraphics{./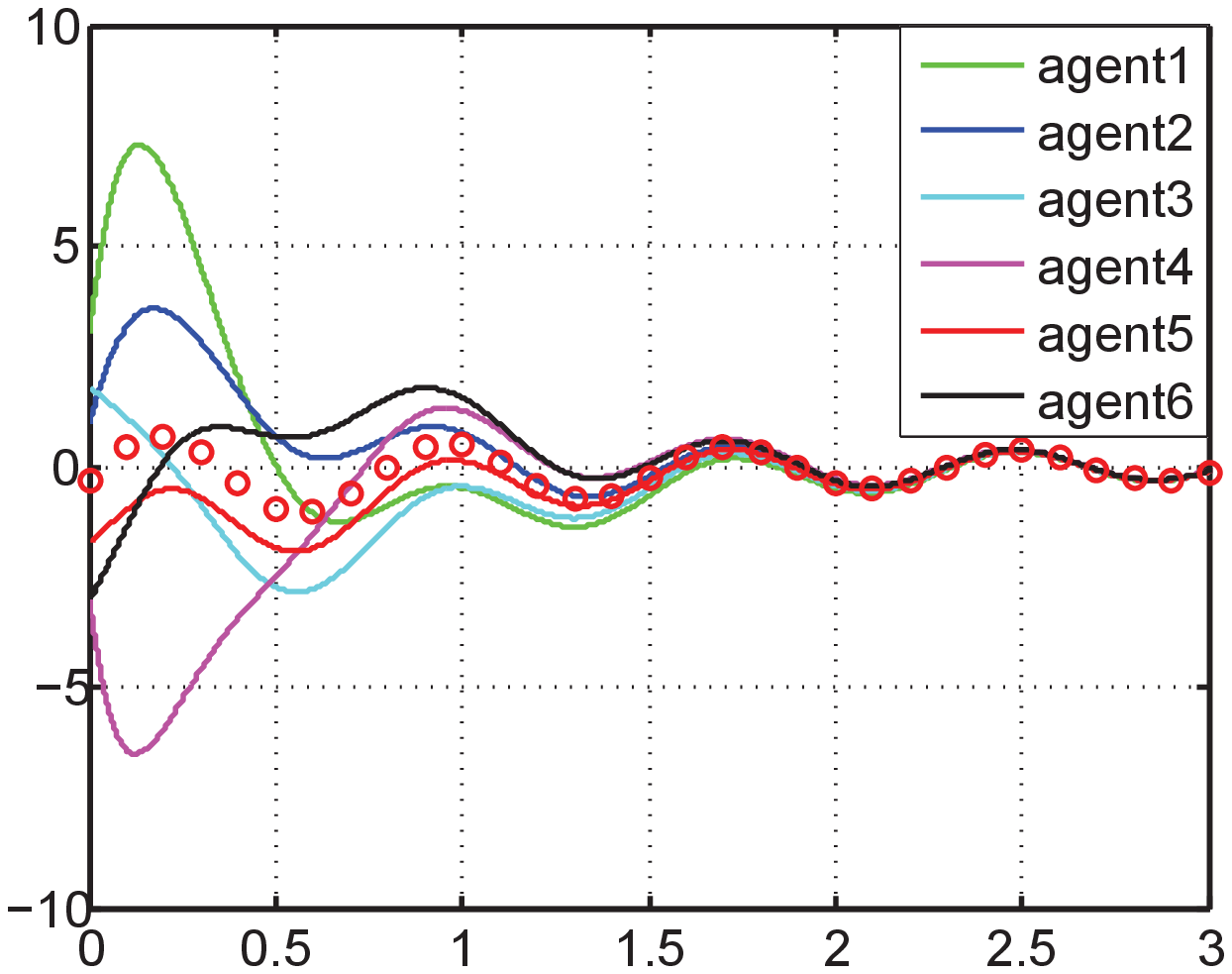}}
\put (-370, 60)
{\rotatebox{90} {{\scriptsize ${x_3}(t)$}}} \put (-280, -5) {{ \scriptsize {\it t}~/~\it s}}
\put (-185, 60)
{\rotatebox{90} {{\scriptsize ${x_4}(t)$}}} \put (-97, -5) {{ \scriptsize {\it t}~/~\it s}}
\vspace{0em} \caption{Output trajectories.}
\end{center}\vspace{-2em}
\end{figure}

\begin{figure}[!htb]
\begin{center}
\scalebox{0.5}[0.5]{\includegraphics{./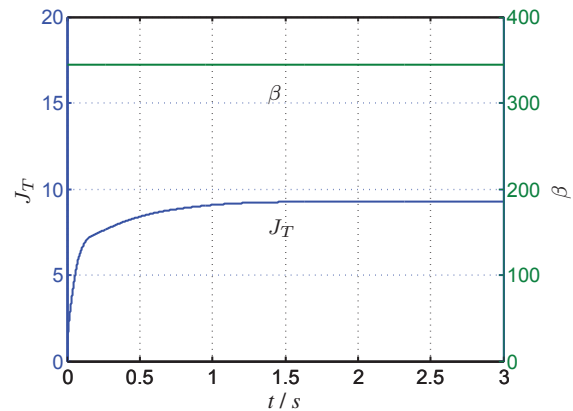}} \put (-2, 73)
{\rotatebox{90} {{\scriptsize $\beta $}}} \put (-203, 71)
{\rotatebox{90} {{\scriptsize $J_T $}}} \put (-111, -5) {{
\scriptsize {\it t}~/~\it s}} \put (-111, 60) {{ \scriptsize $J_T
$}} \put (-111, 110) {{ \scriptsize $\beta $}}
 \vspace{0em}
\caption{Trajectory of the cost function $J_T $.}
\end{center}\vspace{-3em}
\end{figure}

In Figure 3, the state trajectories of this multi-agent network are shown, and the trajectories marked by circles are the curves of the consensus function given in Theorem \ref{theorem1} and the full curves represent trajectories of states of six agents. Figure 4 depicts the trajectory of the cost function ${J_T}$. One can see from Figures 3 and 4 that state trajectories of all agents converge to the ones marked by circles and the cost function ${J_T}$ converges to a finite value less than $\beta $. The simulation results illustrate that this Lipschitz nonlinear multi-agent network achieves guaranteed-cost consensus. \vspace{-1.5em}

\section{Conclusions}\label{section6}\vspace{-0.5em}
The current paper addressed guaranteed-cost consensus for high-order multi-agent networks with Lipschitz nonlinearity and switching topologies. An explicit expression of the consensus dynamics, which is intrinsically coincident with the own dynamics of each agent, was given and it was shown that the initial state of the consensus dynamics is the average of initial states of all agents. Furthermore, based on the Riccati equation, a guaranteed-cost consensualization criterion was presented, where the dimension of the variable is independent of the number of agents, and an upper bound of the guaranteed cost was determined, which is associated with the initial states of all agents. Moreover, an approach to determine the minimum guaranteed cost was proposed in terms of LMIs.

The current paper assumes that all interaction topologies in the switching set are undirected and connected, which means that the Laplacian matrix associated with each topology is symmetric and has a simple zero eigenvalue. By this structure feature, the energy consumption term and the consensus regulation term can be simplified as shown in (\ref{21}) and (\ref{22}), and the impacts of the nonlinear term in the disagreement dynamics can be eliminated as shown in (\ref{16}) and (\ref{17}). However, the Laplacian matrix of a directed topology may be asymmetric and may not be diagonalizable, so the approaches in the current paper are no longer valid and some new methods are required to simplify the quadratic cost function and eliminate the impacts of the nonlinear term. We will focus on guaranteed-cost consensus control for nonlinear multi-agent networks with directed switching topologies in our further work.\vspace{-1.5em}

\section{Appendix}\label{section7}\vspace{-0.5em}
An undirected graph $G$ consists of a node set $V(G) = \left\{ {{v_1},{v_2}, \cdots ,{v_N}} \right\}$, an edge set $E(G) \subseteq \left\{ {\left( {{v_i},{v_j}} \right):{v_i},{v_j} \in V(G)} \right\}$, and a symmetric adjacency matrix $W = \left[ {{w_{ij}}} \right] \in {\mathbb{R}^{N \times N}}$, where ${w_{ij}} > 0$ if $\left( {{v_j},{v_i}} \right) \in E(G)$ and ${w_{ii}} = 0$~$\left( {i = 1,2, \cdots ,N} \right)$. The in-degree matrix and the Laplacian matrix of $G$ is defined as $D = {\rm{diag}}\left\{ {\sum\nolimits_{j \in {N_i}} {{w_{ij}}} ,i = 1,2, \cdots ,N} \right\}$ and $L = D - W$, respectively, where ${N_i} = \left\{ {{v_j} \in } \right.\left. {V(G):\left( {{v_j},{v_i}} \right) \in E(G)} \right\}$ denotes the neighbor set of node ${v_i}$. $G$ is said to be connected if there exists at least an undirected path between any two nodes. It can be shown that $L$ at least has a zero eigenvalue and ${{\bf{1}}_N}$ is the associated eigenvector. Especially, $0$ is a simple eigenvalue of $L$ and all the other $N-1$ eigenvalues are positive if $G$ is connected. More details on graph theory can be found in \cite{c30}.\vspace{-1.5em}

\end{document}